\renewcommand{\mathcal}[1]{\mathscr{#1}}
\theoremstyle{plain}
\newtheorem{theorem}{Theorem}%[section]
\newtheorem{lemma}[theorem]{Lemma}
\newtheorem{claim}{Claim}%[theorem]
\newcommand{\FF}{\mathcal{F}}
\newcommand{\CC}{\mathcal{C}}
\newcommand{\inneigh}{\mathrm{N}_{-}}
\newcommand{\outneigh}{\mathrm{N}_{+}}
\renewcommand{\phi}{\varphi}
\newcommand\restr[2]{{% we make the whole thing an ordinary symbol
  \left.\kern-\nulldelimiterspace % automatically resize the bar with \right
  #1 % the function
  \vphantom{\big|} % pretend it's a little taller at normal size
  \right|_{#2} % this is the delimiter
}}
\renewcommand{\restr}[2]{#1{\uparrow}_{#2}}
\newcommand{\bfSig}{\boldsymbol{\Sigma}}
\title{Complexity Aspects of the Extension of\\ Wagner's Hierarchy to $k$-Partitions}
\author{Vladimir Podolskii
\institute{Tufts University, Medford, MA, USA} \email{podolskii.vv@gmail.com}
\and Victor  Selivanov
\institute{Department of Mathematics and Computer Science, St.Petersburg  University, Saint Petersburg, Russia}
\institute{A.P. Erhov Institute of Informatics Systems, Novosibirsk, Russia}
\email{vseliv@iis.nsk.su}
}
\date{}
\begin{document}

\maketitle

\begin{abstract}
    It is known that the Wadge reducibility of regular $\omega$-languages is efficiently decidable~(Krishnan et al., 1995), (Wilke, Yoo, 1995).

    In this paper we study analogous problem for regular $k$-partitions of $\omega$-languages. In the series of previous papers (Selivanov, 2011), (Alaev, Selivanov, 2021), (Selivanov, 2012) there was a partial progress towards obtaining an efficient algorithm for deciding the Wadge reducibility in this setting as well. In this paper we finalize this line of research providing a quadratic algorithm (in RAM model). For this we construct a quadratic algorithm to decide a preorder relation on iterated posets. 

    Additionally, we discuss the size of the representation of regular $\omega$-languages and suggest a more compact way to represent them. The algorithm we provide is efficient for the more compact representation as well.

    %{\em Key Words.} Wadge reducibility, Wagner hierarchy, $k$-partition, Muller $k$-acceptor, iterated labeled poset, efficient algorithm.

    %\keywords{Wadge reducibility  \and Wagner hierarchy \and $k$-partition \and Muller $k$-acceptor \and iterated labeled poset \and efficient algorithm.}
\end{abstract}

\section{Introduction}\label{in}

In \cite{wag79}, K. Wagner has shown that the quotient-poset of the preorder ${(\mathcal{R};\leq_W)}$ of regular $\omega$-languages under the Wadge reducibility (i.e., $m$-reducibility by continuous functions on the Cantor space of $\omega$-words) is semi-well-ordered with order type $\omega^\omega$, and that the related algorithmic problems are decidable. E.g., given Muller acceptors $\mathcal{A}$ and $\mathcal{B}$, one can effectively solve the relation $L(\mathcal{A})\leq_W L(\mathcal{B})$ between the corresponding regular $\omega$-languages. 
Later it was shown that there are efficient algorithms solving such problems   \cite{kpb,wy}, in particular the problem $L(\mathcal{A})\leq_W L(\mathcal{B})$ is solvable in cubic time.%\nb{Is it on Turing Machines? No, they use essentially the same model. Cube might appear because they consider precisely Muller acceptance which contains less information than our modification for 2-partitions (they have only one row of the acceptance table of our two). For k>2 deletion on one row is also possible but less natural as for k=2. As a result, our algorithms does not generalize theirs, and we do not claim this. Originally, I wondered on the existence of a polynomial-time algorithm:)}

In \cite{s11} (see also \cite{s23} for detailed proofs), the Wagner theory was extended from the regular  sets $A\subseteq X^\omega$ (identified in the usual way with functions ${A:X^\omega\rightarrow \{0,1\}}$) to the regular $k$-partitions $A:X^\omega\rightarrow \{0,\ldots,k-1\}$ of the set $X^\omega$ of $\omega$-words over a finite alphabet $X$.  Motivations for this extension
come from the fact that similar objects are important e.g. in computability theory~\cite{s22}, descriptive set theory~\cite{km19},  and complexity theory~\cite{ko00}. 

The extension from sets to $k$-partitions for $k>2$ is  non-trivial in the sense that the corresponding structure $(\mathcal{R}_k;\leq_W)$ becomes much more complex. Nevertheless, it admits a nice combinatorial characterization in terms of iterated $h$-preorders on labeled forests (terminology is briefly  recalled in the next section), and the full extension of  Wagner's hierarchy to $k$-partitions is possible; the extension is called the fine hierarchy  (FH) of $\omega$-regular $k$-partitions). But the existence of the corresponding {\em efficient} algorithms for the algorithmic problems (like the extension of $L(\mathcal{A})\leq_W L(\mathcal{B})$ to Muller's $k$-acceptors \cite{s23}) is far from obvious.

In this paper, we address the latter problem. A first step in this direction was made in \cite{as21} where, with the use of some previous results from \cite{hs14}, efficient algorithms deciding basic problems about the iterated $h$-preorders on labeled forests were established. This is relevant because levels of the FH of $k$-partitions are naturally denoted by the iterated labeled forests, and manipulations with the levels seem inevitable. But unfortunately, this does not immediately yield an efficient algorithm for solving $L(\mathcal{A})\leq_W L(\mathcal{B})$ because one needs first to find (from given Muller's $k$-acceptors) the levels of the FH (i.e., the corresponding forests $F$ for $\mathcal{A}$ and $G$ for $\mathcal{B}$) where the $k$-partitions $L(\mathcal{A})$ and $L(\mathcal{B})$ are Wadge complete. In computing $F,G$ from $\mathcal{A},\mathcal{B}$, one can first compute iterated $k$-posets $P,R$ (this computation is feasible), and then to unfold $P,R$ to forests $F=u(P),G=u(R)$ using a natural algorithm first described  in \cite{s04} and then elaborated in \cite{s12}. Unfortunately, the size of the unfolded forests grows exponentially, so on this way it is hopeless to find an efficient algorithm.

A possible solution is suggested by the observation in \cite{s12} that it is possible to name levels of the FH of $k$-partitions directly by the iterated poset $P$ instead of the unfolded forest $u(P)$, obtaining thus a more succinct notation system for levels. But then we have to work with the preorder $\preceq$ on posets induced by the preorder $\leq_h$ (i.e., $P\preceq R$ iff $u(P)\leq_hu(R)$), and we cannot directly use the algorithms from \cite{as21}. 

As the main result of this paper we reprove the complexity estimates in~\cite{as21} directly for posets. This result applies not only to the extension of Wagner's hierarchy but might also be useful in other situations where the FH of $k$-partitions naturally appears (for an example from computability theory see \cite{s22}). With this at hand, it is not hard to get the desired efficient algorithms for the extended Wagner hierarchy.

For our algorithms we use the RAM model, which is more standard for studying efficient algorithms than the Turing machine model used in \cite{as21}. In this model our algorithm is quadratic and we note that the algorithm from~\cite{as21} is quadratic as well in the RAM model (see Section~\ref{sec:model} for details). 

Since we are studying efficient algorithms, the size of the representation of the input matters. The straightforward way to represent $k$-acceptors is to provide for each subset of states a label in the partition. However, note that Muller acceptors operate with cycles and not all subsets can be cycles. We observe that actually the number of cycles is polynomially smaller than the number of subsets of states. This suggests another way to represent $k$-acceptors that is more compact and might be useful in some settings. We note that the algorithm we provide is efficient  for this type of representation of inputs as well.

%Fortunately, it is possible to reprove the complexity estimates in \cite{as21} directly for posets (though in a slightly different computational model, rather that Turing machines used in \cite{as21}). This is probably the main result of this paper, which applies not only to the extension of Wagner's hierarchy but might also be useful in other situations where the FH of $k$-partitions naturally appears (for an example from computability theory see \cite{s22}). With this at hand, it is not hard to get the desired efficient algorithms for the extended Wagner hierarchy.

After recalling some preliminaries in the next section, in Section \ref{poset} we prove the mentioned result on iterated labeled posets. In Section \ref{accept} we explain how to deduce efficient algorithms working with a straightforward representation of Muller's $k$-acceptors. 
In Section~\ref{sec:representation} we discuss a less straightforward representations of acceptors and translate our algorithm for Muller's $k$-acceptors to a more succinct representation of inputs.

%We conclude in   Section \ref{con} with a short discussion of other related algorithms. 

\section{Preliminaries}\label{pre}

We use standard notation and facts about finite automata on infinite words which may be found e.g. in \cite{pp,th90}. We work with a fixed finite  alphabet $X$ containing more than one letter, and only with deterministic finite automata.

\subsection{Automata and acceptors}
By an {\em automaton} (over $X$) we mean a triple ${\mathcal M}=(Q,f,in)$
consisting  of a finite non-empty set $Q$ of states,  a transition
function $f:Q\times X\rightarrow Q$ and an initial state $in\in Q$.
The  function $f$ is  extended to the function
$f:Q\times X^*\rightarrow Q$  by induction
$f(q,\varepsilon)=q$ and $f(q,u\cdot x)=f(f(q,u),x)$, where $u\in
X^*$ and $x\in X$. Similarly, we may define the function $f:Q\times
X^\omega\rightarrow Q^\omega$ by $f(q,\xi)(n)=f(q,\xi\upharpoonright_n)$.

Associate
with any automaton ${\mathcal M}$ the set of {\em cycles} (known also as loops) $C_{\mathcal M}=\{f_{\mathcal
M}(\xi)\mid\xi\in X^\omega\}$ where $f_{\mathcal M}(\xi)$ is the set of
states that occur infinitely often in the sequence $f(in,\xi)\in
Q^\omega$.
A {\em Muller acceptor} is a pair $({\mathcal M},{\mathcal F})$ where ${\mathcal M}$ is an
automaton and ${\mathcal F}\subseteq C_{\mathcal M}$; it recognizes the set
$ L({\mathcal M},{\mathcal F})=\{\xi\in X^\omega\mid f_{\mathcal M}(\xi)\in{\mathcal
F}\}$. The Muller  acceptors recognize exactly the
{\em regular $\omega$-languages}. 

A $k$-partition $A:X^\omega\rightarrow \{0,\ldots,k-1\}$ is {\em regular}, if any its component $A_i=A^{-1}(i)$, $i<k$, is regular. Regular $k$-partition $A$ may be represented by $k$-tuples of Muller acceptors which recognize the components of $A$, but we will use a slightly different kind of acceptors introduced in \cite{s11}. A {\em Muller  $k$-acceptor} is a pair $({\mathcal M},A)$ where
${\mathcal M}$ is an  automaton and $A: C_{\mathcal M}\rightarrow k$ is a
$k$-partition of $C_{\mathcal M}$. The Muller  $k$-acceptor $({\mathcal M},A)$ recognizes the 
$k$-partition $L({\mathcal M},A)=A\circ f_{\mathcal M}$ where $f_{\mathcal
M}:X^\omega\rightarrow C_{\mathcal M}$ is  defined above.

Note that the Muller  $2$-acceptors are equivalent to Muller acceptors though syntactically they are slightly different beecause, along with the set $\mathcal{F}$ of accepting cicles a Muller  $2$-acceptor also contains its complement $C_\mathbf{M}\setminus\mathcal{F}$. This causes some distinctions of our complexity estimates for $k=2$ from those in \cite{kpb,wy}.

\subsection{Iterated $k$-posets}

Next we recall some information about the iterated $h$-preorder and its variants; for additional information see e.g. \cite{s22,s23}. Let $(P;\leq)$ be a finite poset; if $\leq$ is clear from the context, we simplify the notation of the poset to $P$. Any subset of $P$ may be considered as a poset with the induced
partial ordering.  By a {\em  forest} we mean a finite poset  in which every lower cone $\downarrow{x}$, $x\in P$, is a chain. A {\em  tree} is a forest with the least element (called the {\em  root} of the
tree). 

%The ``abstract'' trees (forests) just defined are for almost all purposes equivalent to their isomorphic copies realised as initial segments of $(\omega^*;\sqsubseteq)$ (resp. $(\omega^+;\sqsubseteq)$) where $\sqsubseteq$ is the prefix relation on finite strings of naturals. Below we often work with such ``concrete'' copies which enable to use convenient standard notation for strings.

Let $(Q;\leq)$ be a preorder. A {\em $Q$-poset}  is a triple
$(P,\leq,c)$ consisting of a finite nonempty poset $(P;\leq)$,
$P\subseteq\omega$,  and a labeling $c:P\rightarrow Q$. Let ${\mathcal P}_Q$, ${\mathcal
F}_Q$, and ${\mathcal T}_Q$  denote the sets of
all finite $Q$-posets, $Q$-forests, and $Q$-trees, respectively.
For the particular case $Q=\bar{k}=\{0,\cdots,k-1\}$ of
antichain with $k$ elements we denote the corresponding $h$-preorders by ${\mathcal
P}_k$, ${\mathcal F}_k$, and ${\mathcal T}_k$. A {\em morphism}
${f:(P,\leq,c)}\rightarrow(P^\prime,\leq^\prime,c^\prime)$ between
$Q$-posets is a monotone function
$f:(P;\leq)\rightarrow(P^\prime;\leq^\prime)$ satisfying $\forall
x\in P\,(c(x)\leq c^\prime(f(x)))$. 
The {\em $h$-preorder} $\leq_h$ on ${\mathcal P}_Q$ is defined as
follows: $P\leq_h P^\prime$, if there is a  morphism
$f:P\rightarrow P^\prime$.  

For any $q\in Q$ let $s(q)\in{\mathcal T}_Q$ be the singleton tree labeled by $q$; then $q\leq r$ iff $s(q)\leq_hs(r)$. Identifying $q$ with $s(q)$, we may think that $Q$ is a substructure of ${\mathcal T}_Q$.
The quotient-poset of $({\mathcal F}_Q;\leq_h,\sqcup)$ is a semilattice where the supremum operation is induced by the disjoint union $F\sqcup G$ of $Q$-forests $F,G$. The semilattice is generated by the join-irreducible elements induced by trees. The set ${\mathcal F}_Q$ may be identified with the set ${\mathcal T}^\sqcup_Q$ of finite disjoint unions of trees. 

For any finite $Q$-poset $(P,c)$ there
exist a finite $Q$-forest $(F,d)$ and a morphism $f$ from $F$
onto $P$ such that $F$ is
a largest element in ${(\{G\in{\mathcal F}_Q\mid
G\leq_hP\};\leq_h)}$. The forest $F=u(P)$ is constructed by a
natural bottom-up unfolding of $P$ (for additional details see
\cite{s04} and sections 7,8 of \cite{s23}). The unfolding operator $u:\mathcal{P}_Q\to\mathcal{F}_Q$ gives rise to a preorder $\preceq$ on $\mathcal{P}_Q$ (already mentioned in the introduction) defined by $P\preceq R\leftrightarrow u(P)\leq_hu(R)$. Note that $P\leq_h R$ implies $P\preceq R$ (but the converse fails in general), and that both relations coincide on $\mathcal{F}_Q$.

Define the sequence $\{\mathcal{T}_k(n)\}_{n<\omega}$ of preorders by induction on $n$
as follows: $\mathcal{T}_k(0)=\overline{k}$  and
$\mathcal{T}_k(n+1)=\mathcal{T}_{\mathcal{T}_k(n)}$. The sets $\mathcal{T}_k(n)$, $n<\omega$, are pairwise disjoint but, identifying
the elements $i$ of $\overline{k}$ with the corresponding singleton trees $s(i)$   labeled by $i$ (which are precisely the
minimal elements  of $\mathcal{T}_k(1)$), we may think that
$\mathcal{T}_k(0)\sqsubseteq\mathcal{T}_k(1)$, i.e. the quotient-poset of the first preorder is an initial segment of the quotient-poset of the second. This also induces an embedding of $\mathcal{T}_k(n)$ into $\mathcal{T}_k(n+1)$ as an initial segment, so (abusing notation) we may think that $\mathcal{T}_k(0)\sqsubseteq\mathcal{T}_k(1)\sqsubseteq\cdots$.

Let
$\mathcal{T}_k(\omega)=\bigcup_{n<\omega}\mathcal{T}_k(n)$; the induced preorder on this set is again denoted by $\leq_h$. We often simplify $\mathcal{T}_k(n)^\sqcup$ to $\mathcal{F}_k(n)$; in particular, $\mathcal{F}_k(2)=\mathcal{F}_{\mathcal{T}_k}$. The embedding $s$ is extended to $\mathcal{T}_k(\omega)$ by defining $s(T)$ as the singleton tree labeled by $T$.

Similar iterations are possible for  other aforementioned constructions. E.g., we can define iterations of the construction $Q\mapsto\mathcal{V}_Q$ where $\mathcal{V}_Q$ is the set of pointed posets from $\mathcal{P}_Q$ (i.e., posets with a smallest element) ordered by the relation $\preceq$ (rather than by $\leq_h$). In this way, we obtain the sequence $\{\mathcal{V}_k(n)\}_{n\leq\omega}$.   

The aforementioned unfolding operator $u:\mathcal{P}_Q\to\mathcal{F}_Q$ is naturally extended and modified to operators $u:\mathcal{V}_k(n)\to\mathcal{T}_k(n)$ and $u:\mathcal{V}_k(n)^\sqcup\to\mathcal{T}_k(n)^\sqcup$ for each $n\leq\omega$ which have properties similar to those of the basic operator $u:\mathcal{P}_Q\to\mathcal{F}_Q$ (cf. Lemma 8.7 in \cite{s23}). Especially relevant to this paper is the unfolding $u:\mathcal{P}_{\mathcal{V}_k}\to\mathcal{F}_{\mathcal{T}_k}$ which first unfolds the poset to a forest, and then unfolds the labels (which are pointed $k$-posets) to $k$-trees.

\subsection{Bases and fine hierarchies}

Next we recall some notation and notions relevant to the fine hierarchies. A {\em 1-base} in a set $S$ is just a subalgebra   $\mathcal{L}$ of $(P(S);\cup,\cap,\emptyset,S)$, i.e. a subset of of the Boolean $P(S)$ closed under finite unions and intersections. A {\em 2-base} in  $S$ is a pair $\mathcal{L}=(\mathcal{L}_0,\mathcal{L}_1)$ of 1-bases in $S$ such that $\mathcal{L}_0\cup\check{\mathcal{L}}_0\subseteq\mathcal{L}_{1}$, where $\check{\mathcal{L}}_1$ is the set of complements of sets in $\mathcal{L}_1$. 

By an {\em $\omega$-base in a set $S$} we mean a sequence $\mathcal{L}=\mathcal{L}(S)=\{\mathcal{L}_n\}_{n<\omega}$ of 1-bases such that  $\mathcal{L}_n\cup\check{\mathcal{L}}_n\subseteq\mathcal{L}_{n+1}$ for each $n$. Note that the $\omega$-bases subsume the 1-bases $\mathcal{L}$ (by taking $\mathcal{L}_0=\mathcal{L}$ and $\mathcal{L}_{k+1}=(\mathcal{L})$ where $(\mathcal{L})$ is the Boolean closure of $\mathcal{L}$) and the 2-bases $(\mathcal{L}_0,\mathcal{L}_1)$ (by taking $\mathcal{L}_0=\mathcal{L}_0$, $\mathcal{L}_1=\mathcal{L}_1$ and $\mathcal{L}_{k+2}=(\mathcal{L}_1)$). 

The $\omega$-base $\mathcal{L}$ is {\em reducible} if every its level $\mathcal{L}_n$ has the reduction property, i.e. for any $A,B\in\mathcal{L}_n$ there exist $A',B'\in\mathcal{L}_n$ such that $A'\subseteq A$, $B'\subseteq B$,  $A'\cap B'=\emptyset$, and  $A'\cup B'= A\cup B$.  

We give two examples of bases. Let $\{\bfSig^0_{1+n}\}_{n<\omega}$  be the $\omega$-base of finite $\bfSig$-levels of Borel hierarchy in the Cantor space $X^\omega$. This base is well known to be reducible.
The class $\mathcal{R}$  of regular $\omega$-languages over $X$ induces the $\omega$-base 
$\{\mathcal{R}\cap\bfSig^0_{1+n}\}_{n<\omega}$ in $\mathcal{R}$. 
Since all regular $\omega$-languages  sit in the Boolean closure of $\bfSig^0_2$, the latter $\omega$-base coincides with  the 2-base $\mathcal{R}\bfSig=(\mathcal{R}\cap\bfSig^0_1,\mathcal{R}\cap\bfSig^0_2)$. As shown in \cite{s98}, this base in also reducible.

With any $\omega$-base $\mathcal{L}$ in $S$ one can associate the {\em FH of $k$-partitions over $\mathcal{L}$} which is a family $\{\mathcal{L}(F)\}_{F\in\mathcal{T}_k(\omega)^\sqcup}$ of subsets of $k^S$. 
The notation system  $\mathcal{T}_k(\omega)^\sqcup$ for levels of the FH based on iterated trees and forests is convenient for establishing properties of the FH in a series of papers of the second author (see e.g. \cite{s23} and references therein). 

We do not reproduce here all (rather technical) details concerning the FH but we note that, instead of the notation system $\mathcal{T}_k(\omega)^\sqcup$ for its levels and the relation $\leq_h$ on the forests, we can equivalently take the larger system $\mathcal{V}_k(\omega)^\sqcup$ and the relation $\preceq$ on labeled posets, as explained in the introduction. In the second approach (first described in sections 7 and 8 of \cite{s12}) the FH over $\omega$-base $\mathcal{L}$ takes the form $\{\mathcal{L}(P)\}_{P\in\mathcal{V}_k(\omega)^\sqcup}$,
where the levels have the property: $\mathcal{L}(P)=\mathcal{L}(u(P))$ for each $P\in\mathcal{V}_k(\omega)^\sqcup$ (see Lemma 8.16(5) in \cite{s12}). This property is crucial for the results in this paper, as explained in the introduction. 

We give some details for the particular cases of FHs of $k$-partitions over 1-bases and 2-bases (which are in fact sufficient for this paper). The FH over a 1-base $\mathcal{L}$ in $S$ looks as $\{\mathcal{L}(F)\}_{F\in\mathcal{F}_k}$ (in the forest notation system)  or as $\{\mathcal{L}(P)\}_{P\in\mathcal{P}_k}$ (in the poset notation system). The level $\mathcal{L}(P)$ of the latter hierarchy consists of all $k$-partitions $A:S\to\bar{k}$ such that for some family $\{B_p\}_{p\in P}$ of $\mathcal{L}$-sets we have: $A_i=\bigcup\{\tilde{B}_p\mid p\in P_i\}$ for each $i<k$, where $\tilde{B}_p=B_p\setminus\bigcup\{B_q\mid p<q\}$ and $P_i=c^{-1}(i)$, $c:P\to\bar{k}$. According to section 7 of \cite{s12}, $\mathcal{L}(P)=\mathcal{L}(u(P))$, the family $\{B_p\}_{p\in P}$ may be assumed monotone (i.e., $B_p\supseteq B_q$ for $p<q$), and, if $P\in\mathcal{F}_k$ and $\mathcal{L}$ is reducible, the family $\{B_p\}_{p\in P}$ may be assumed reduced (i.e., $B_p\cap B_q=\emptyset$ for all incomparable $p,q\in P$). 

The FH over a 2-base $(\mathcal{L}_0,\mathcal{L}_1)$ in $S$ looks as $\{\mathcal{L}(F)\}_{F\in\mathcal{F}_{\mathcal{T}_k}}$ (in the forest notation system)  or as $\{\mathcal{L}(P)\}_{P\in\mathcal{P}_{\mathcal{V}_k}}$ (in the poset notation system). The level $\mathcal{L}(P)$ of the latter hierarchy consists of all $k$-partitions $A:S\to\bar{k}$ such that for some family $\{B_p\}_{p\in P}$ of $\mathcal{L}_0$-sets and  for some families $\{B_{p_0p_1}\}_{p_1\in c(p_0)}$  of $\mathcal{L}_1$-sets, $p_0\in P$,  we have: $A_i=\bigcup\{\tilde{B}_{p_0p_1}\mid p_0\in P, p_1\in c(p_0)_i\}$ for each $i<k$, where $\tilde{B}_{p_0p_1}=B_{p_0p_1}\setminus\bigcup\{B_{p_0q_1}\mid p_1<q_1\in c(p_0)\}$, $\tilde{B}_{p_0}=\bigcup\{\tilde{B}_{p_0p_1}\mid p_1\in c(p_0)\}$, and $c(p_0)_i=d^{-1}(i)$, $d:c(p_0)\to\bar{k}$. According to section 8 of \cite{s12}, $\mathcal{L}(P)=\mathcal{L}(u(P))$, the families above may be assumed monotone, and, if $P\in\mathcal{F}_{\mathcal{T}_k}$ and the base $(\mathcal{L}_0,\mathcal{L}_1)$ is reducible, the families above may be assumed reduced.

For the 2-base  $\mathcal{L}=(\mathcal{R}\cap\bfSig^0_1,\mathcal{R}\cap\bfSig^0_2)$ above,  the FH  $\{\mathcal{L}(F)\}_{F\in\mathcal{T}_k(2)^\sqcup}$ is the extension of Wagner's hierarchy to $k$-partitions introduced in \cite{s11}. For $k=2$ we get back to the classical Wagner hierarchy in a set-theoretical presentation from \cite{s98}. As shown in \cite{s11,s23}, every $\omega$-regular $k$-partition is Wadge complete in some level $F\in\mathcal{T}_k(2)^\sqcup$, all the possibilities are realized and the structure of such degrees is isomorphic to $\mathcal{T}_k(2)^\sqcup$.

\subsection{Computational model} \label{sec:model}

We conclude this section with comparing the computational model used in this paper with that from \cite{as21}.
The paper~\cite{as21} used Turing machines to analyze the complexity of their algorithm. However, for efficient algorithms the model that is closer to practical computations and that is widely considered to be standard is the Random Access Machine (RAM) model. Each memory entry in this model contains a string (typically bounded in length by $O(\log n)$, where $n$ is the size of input) and standard arithmetic operations on memory entries can be performed in constant time. Operations with memory (store, copy, load) as well as control operations (branching, subroutine calls) can also be done in constant time. See e.g.~\cite[Section 2.2]{clrs22} or~\cite[Section 1.1.2]{gt14} for more details.

The paper~\cite{as21} had constructed a cubic algorithm for the case of trees in the model of multitape Turing machines. We note that this algorithm is quadratic in RAM model. Indeed, the main recurrence relations on the complexity $t(n,k)$ of the algorithm given two structures of size $n$ and $k$ respectively are
$$
t(n,k) \leq \sum_{i=1}^{l} t(n_i,k) + O(n+k),
$$
where $n_1, \ldots, n_l$ are any natural numbers such that $\sum_{i}n_i = n$, and the symmetric relation for $k$ instead of $n$. The solution to this recurrence is $t(n,k) = {O(n^2k + nk^2)}$.
	
The term $O(n+k)$ is needed on the step of the recursive construction to scan the inputs to find labels of specific nodes. This requires linear complexity on multitape Turing machines, but can be done with constant number operations in our model. Thus in our model the recurrence changes to
$$
t(n,k) \leq \sum_{i=1}^{l} t(n_i,k) + O(1)
$$
and the complexity drops to $t(n,k) = O(nk)$.

%We would like to note that in our algorithm as elementary operations with numbers we will use only comparisons. As we described above, they are assumed to take $O(1)$ time in RAM model. In case we would like to consider a model that allows operations only with bits, rather then integers, the complexity of comparison becomes linear in the size of the binary representation of numbers. However, we also have to consider the input integers to be given in binary representation and the running time bound is still the same w.r.t. the size of input.

\section{Algorithms on labeled posets}\label{poset}

Suppose we are given a finite poset $(P,c)$ labeled by a poset $Q$, i.e. $(P,c)\in\mathcal{P}_Q$ (see Section \ref{pre}). 
In this section it is convenient to represent the bottom-up unfolding of $P$ by $F(P)$ instead of $u(P)$.
It is convenient to represent elements of the unfolding $F(P)$ as paths $v = v_1\ldots v_t$, where $v_i \in P$, $v_1$ is a minimal element in $P$, $v_i < v_{i+1}$, and there are no other elements of $P$ between $v_i$ and $v_{i+1}$. The ordering on $F(P)$ can be naturally described as $v \leq w$, if $v$ is a prefix of $w$. 
%It is not hard to see that $F(P)$ is a forest. We can naturally transfer labeling from $P$ to $F(P)$. For the sake of convenience we will use the same notation for labeling function for $P$ and $F(P)$.

According to Section \ref{pre}, on $\mathcal{P}_Q$ we have two preorders: $\leq_h$ and $\preceq$, where $P_1\preceq P_2$  means that $u(P_1)\leq_h u(P_2)$. Deciding the relation $\leq_h$ on $\mathcal{P}_k$ is NP-complete for every $k\geq2$~\cite{kl}. It becomes polynomial time decidable if we restrict posets to forests. However, as we show in this section, the order $\preceq$ is polynomial time decidable for arbitrary posets. 

For $v \in P$ denote by $\restr{P}{v} \subseteq P$ an upper cone of $v$ in $P$. It is easy to see that posets $F(\restr{P}{v})$ and $\restr{F(P)}{u}$, where $u$ is an arbitrary element of unfolding with an endpoint in $v$, are isomorphic.

From this the following observation follows easily.

\begin{lemma}
For any element $v \in P$ all orders $\restr{F(P)}{u}$, such that $u$ has an endpoint $v$, are isomorphic.
\end{lemma}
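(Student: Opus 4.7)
The plan is to reduce the statement to the remark stated just before the lemma, which asserts that $F(\restr{P}{v}) \cong \restr{F(P)}{u}$ whenever $u$ has endpoint $v$. Granting that remark, the lemma is immediate: if $u_1$ and $u_2$ both end at $v$, then both $\restr{F(P)}{u_1}$ and $\restr{F(P)}{u_2}$ are isomorphic to the single poset $F(\restr{P}{v})$, hence isomorphic to one another by transitivity of isomorphism.

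It therefore remains to justify the preceding remark, which I would do by exhibiting an explicit isomorphism $\phi \colon \restr{F(P)}{u} \to F(\restr{P}{v})$. By the description of the unfolding, every element of $\restr{F(P)}{u}$ has the form $v_1 \ldots v_t w_1 \ldots w_s$ extending $u = v_1 \ldots v_t$ (with $v_t = v$, and $s = 0$ allowed), where consecutive entries are in the cover relation of $P$. I set
\[
\phi(v_1 \ldots v_t w_1 \ldots w_s) \;=\; v \, w_1 \ldots w_s.
\]
I then verify, in order: (i) the image is a legitimate element of $F(\restr{P}{v})$, using that $v$ is the unique minimum of $\restr{P}{v}$ and that the cover relation of $P$ between elements above $v$ agrees with the cover relation of $\restr{P}{v}$ (any element of $P$ between two elements above $v$ is itself above $v$); (ii) $\phi$ is a bijection, the inverse being given by prepending the fixed prefix $v_1 \ldots v_{t-1}$; (iii) $\phi$ respects the prefix order, since one path extends another before dropping a common prefix if and only if the truncated paths do; and (iv) $\phi$ respects labels, because the label of a path in the unfolding is read off from its endpoint, and the endpoint is preserved by $\phi$.

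The steps are all straightforward unwindings of the definitions, so there is no genuine obstacle; the only point worth being explicit about is the agreement of cover relations in (i), which is what guarantees that $\phi$ actually sends admissible paths to admissible paths. Everything else is purely bookkeeping.
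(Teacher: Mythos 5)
Your proposal is correct and matches the paper's approach: the paper derives the lemma directly from the preceding remark that $F(\restr{P}{v})\cong\restr{F(P)}{u}$, exactly as you do, and leaves that remark as "easy to see" where you spell out the prefix-dropping isomorphism. Your added verification of the isomorphism (in particular the agreement of cover relations on the upper cone) is sound and just fills in detail the paper omits.
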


Now we are ready to prove the main result of this section.

\begin{theorem}\label{pos}
Assume that there is an algorithm $A$ that checks the relation ${q_1\leq q_2}$ on $Q$ in time $C \cdot |q_1|\cdot |q_2|$, where $|q_i|$ is the size of the description of $q_i$ and $C$ is a positive constant. Then, there is an algorithm that checks the relation $(P_1,c_1)\preceq(P_2,c_2)$ on $\mathcal{P}_Q$ in time $C \cdot |P_1|\cdot |P_2|$.
\end{theorem}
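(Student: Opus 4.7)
The plan is to replace the potentially exponential unfoldings $F(P_1)$ and $F(P_2)$ by a dynamic program indexed by pairs $(v,w)\in P_1\times P_2$. The preceding lemma---that the upper cone $\restr{F(P)}{u}$ in the unfolding depends only on the endpoint $v$ of $u$, and is isomorphic to $F(\restr{P}{v})$---is exactly what lets us represent the subtree of the unfolding hanging above an element by the poset datum $\restr{P}{v}$ alone, so that the table of subproblems has size just $|P_1|\cdot|P_2|$.

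First I reformulate the target relation. The minimal elements of $F(P_i)$ coincide with the minimal elements of $P_i$ regarded as length-one paths, and the tree component of $F(P_i)$ rooted at such a minimum $v$ is isomorphic to $F(\restr{P_i}{v})$. Hence $P_1\preceq P_2$, i.e.\ $F(P_1)\leq_h F(P_2)$, holds iff for every minimal $v\in P_1$ there exists a minimal $w\in P_2$ with $F(\restr{P_1}{v})\leq_h F(\restr{P_2}{w})$ as $Q$-trees.

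Next, for $(v,w)\in P_1\times P_2$ I define $\mathrm{Emb}(v,w)$ to abbreviate $F(\restr{P_1}{v})\leq_h F(\restr{P_2}{w})$. A tree morphism from the source either sends its root $v$ to the root $w$ of the target, or to some node strictly above it, in which case---because the target is a tree---the entire image lies in the subtree rooted at a single immediate successor of $w$. This yields the equivalence: $\mathrm{Emb}(v,w)$ iff $\mathrm{Emb}(v,w')$ holds for some immediate $P_2$-successor $w'$ of $w$, or else $c_1(v)\leq c_2(w)$ in $Q$ and every immediate $P_1$-successor $v'$ of $v$ admits an immediate $P_2$-successor $w'$ of $w$ with $\mathrm{Emb}(v',w')$. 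The preceding lemma is exactly what justifies recursing on poset elements instead of on paths of the unfolding.

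Finally, I fill the $|P_1|\cdot|P_2|$-entry table $\mathrm{Emb}$ in any topological order for the product strict order (so that when cell $(v,w)$ is computed, all cells it depends on are already filled), then post-process the minima of $P_1$ to decide $P_1\preceq P_2$. Each cell performs one oracle call to $A$ on $(c_1(v),c_2(w))$ at cost $C\cdot|c_1(v)|\cdot|c_2(w)|$, plus a constant number of RAM operations per pair of outgoing Hasse-edges at $v$ and $w$. Summed over cells, the oracle calls contribute at most $C\cdot\bigl(\sum_v|c_1(v)|\bigr)\cdot\bigl(\sum_w|c_2(w)|\bigr)\leq C\cdot|P_1|\cdot|P_2|$, and the bookkeeping totals $O(|E_1|\cdot|E_2|+|P_1|\cdot|E_2|)=O(|P_1|\cdot|P_2|)$, where $|E_i|$ denotes the number of Hasse edges of $P_i$, bounded by the input size. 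After folding the bookkeeping constant into $C$ this yields the claimed bound. The step I expect to require the most care is the case analysis underpinning clause (a) of the recursion---showing that a morphism sending the source root strictly above $w$ must factor through a unique immediate-successor subtree---together with the final constant accounting, so that the running-time constant matches $C$ rather than some polynomial in $C$.
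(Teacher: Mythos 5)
Your overall architecture (a $|P_1|\times|P_2|$ table $\mathrm{Emb}(v,w)$ meaning ``there is a morphism $F(\restr{P_1}{v})\to F(\restr{P_2}{w})$'', justified by the lemma that these unfoldings depend only on the endpoints, filled in reverse topological order with one label comparison per cell) is exactly the paper's, and your cost accounting matches. But your recurrence is wrong in its second clause, and the error is not cosmetic. You require that when the source root $v$ maps to the target root $w$, ``every immediate $P_1$-successor $v'$ of $v$ admits an immediate $P_2$-successor $w'$ of $w$ with $\mathrm{Emb}(v',w')$.'' A morphism in the $h$-preorder is merely a monotone, label-nondecreasing map; it need not send children into proper child subtrees. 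The subtree above $v'$ may be mapped entirely (or partly) onto $w$ itself. Concretely, take $P_1$ a two-element chain with both elements labeled $0$ and $P_2$ a single element labeled $0$: collapsing the chain onto the point is a morphism, so $P_1\preceq P_2$, yet your recursion returns false at the unique cell because $w$ has no successors. The correct clause --- the one the paper uses --- is that every immediate successor $v'$ of $v$ satisfies $\mathrm{Emb}(v',w)$ (morphism into the \emph{whole} target tree rooted at $w$), which subsumes both the case where $v'$ lands on $w$ and the case where it descends into a child subtree. This is a one-line fix, keeps the table entries and dependency order you describe, and in fact slightly simplifies your per-cell work from $d(v)\cdot d(w)$ to $d(v)+d(w)$; but as written your algorithm is incomplete (it rejects valid instances), so the proof does not go through.

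Your first clause (root of the source mapped strictly above $w$ forces the image into a single immediate-successor subtree, because upper cones in a tree are contained in one branch) is correct and is the part you flagged as delicate; the part that actually fails is the one you treated as routine.
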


\begin{proof}
	%We will consider a computational model that is standard for algorithms theory, in which certain basic operations operations are performed in $O(1)$ time. Among basic operations there are comparisons of numbers, following a pointer and others\nb{This is a RAM model, discuss this earlier}. In this model the standard loops and if-else operators are allowed.

	Next we describe the algorithm.
	
	We will compute for all pairs of elements $v_1 \in P_1$ and $v_2 \in P_2$ one bit of information:
	whether there is a morphism from $F(\restr{P_1}{v_1})$ to $F(\restr{P_2}{v_2})$. We denote this bit by $M(v_1, v_2)$.
	
	We assume that on the input we are given graphs corresponding to posets in which directed edges connect an element $p$ to each element $q$ such that $p < q$ and there are no other elements between $p$ and $q$. We call $q$ a \emph{successor} of $p$. The graphs are given as adjacency lists.
	
	We run a depth-first search (DFS) on $P_1$. For each vertex $v_1$ we fill-in the corresponding row of $M$ after visiting all its successors (this is our post-processing of the vertex $v_1$ in the DFS~\cite[Section 3.2.1]{dpv08}). For this we run a depth-first search on $P_2$ (that is, we have a loop over all vertices of $P_1$ and inside of it another loop over all vertices of $P_2$). 
	
	After visiting all the neighbors of a vertex $v_2$ we can use the following to compute $M(v_1,v_2)$. 
	
	\begin{claim} %\label{cl:dp-step}
		$M(v_1,v_2)=1$ iff there is a successor $u_2$ of $v_2$, such that $M(v_1,u_2)=1$, or  $c_1(v_1) \leq c_2(v_2)$ and for each successor $u_1$ of $v_1$ we have $M(u_1,v_2)=1$.
	\end{claim}
	
	Observe that once all values of $M$ are computed we can check if $P_1 \preceq P_2$ just by checking if for all $v_1$ there is $v_2$ such that $M(v_1,v_2)=1$.
	
	\begin{proof}[Proof of the claim]
		
		$M(v_1,v_2)=1$ iff there is a morphism from $F(\restr{P_1}{v_1})$ to $F(\restr{P_2}{v_2})$. In this morphism $F(v_1)$ is mapped either to $F(v_2)$, or to some other element of the tree $F(\restr{P_2}{v_2})$. The second case means that there is a successor $u_2$ of $v_2$, such that there is a morphism from $F(\restr{P_1}{v_1})$ to $F(\restr{P_2}{u_2})$. The first case means that the label of $c_1(v_1)$ is less or equal to $c_2(v_2)$, and that for any successor $u_1$ of $v_1$ there is a morphism of $F(\restr{P_1}{u_1})$ to $F(\restr{P_2}{v_2})$.
	\end{proof}
	
	We can bound the running time of the algorithm (up to a multiplicative constant) by the following expression:
	$$
	\sum_{v_1}\left( O(1) + d(v_1) + \sum_{v_2} \left( O(1) + d(v_2) + |A(c_1(v_1),c_2(v_2))| + d(v_1) + d(v_2)\right) \right),
	$$
	where $d(v)$ is the out-degree of $v$ and $|A(c_1(v_1),c_2(v_2))|$ is the running time of the algorithm $A(c_1(v_1),c_2(v_2))$ to compare the labels.
	In this expression the terms $O(1)+d(v)$ correspond to the time needed to explore vertex $v$ in depth-first search and the terms $d(v_1)$ and $d(v_2)$ in the end of the formula is the time needed to check the conditions of the claim (we need to go over all neighbors of $v_1$ and $v_2$ for this).
	
	Using the fact that the sum of the out-degrees of vertices in the graph is equal to the number of edges we can simplify the expression to the following (up to a multiplicative factor):
	\begin{align*}
	|V_1| + |E_1| + \sum_{v_1, v_2} \left( O(1) + d(v_2) + |A(c_1(v_1),c_2(v_2))| + d(v_1) + d(v_2)\right) = \\
	|V_1| + |E_1| + |V_1||V_2| + |V_1| |E_2|  + \sum_{v_1, v_2} A(c_1(v_1),c_2(v_2)) + |E_1| |V_2| + |V_1| |E_2| =\\	
	O\left(|V_1||V_2| + |V_1| |E_2| + |E_1| |V_2| + \sum_{v_1, v_2} A(c_1(v_1),c_2(v_2))\right)
	\end{align*}
	Denote by $a_0$ and $b_0$ the sizes of descriptions of $P_1$ and $P_2$ respectively without the descriptions of labels (the size of $P_i$ is $O(|V_i| + |E_i|)$). The sizes of descriptions of labels in $P_1$ we denote by $a_1, \ldots, a_k$, and in $P_2$ by $b_1, \ldots, b_l$. Note that by the statement of the theorem $A(c_1(v_1),c_2(v_2)) = O(a_i b_j))$, where $a_i$ is the size of $c_1(v_1)$ and $b_j$ is the size of $c_2(v_2)$. Then we can upper bound the running time of the algorithm (up to a multiplicative factor) by
	$$
	a_0 b_0 + \sum_{i=1}^k \sum_{j=1}^l a_i b_j \leq \left(\sum_{i=0}^{k} a_i \right) \cdot \left(\sum_{j=0}^{l} b_j \right)
	$$
	as needed.
\end{proof}

\section{Algorithms on Muller's $k$-acceptors}\label{accept}

Here we describe a feasible algorithm deciding the relation $L\leq_WM$ (meaning that $L=M\circ f$ for some continuous function $f$ on $X^\omega$), where the $\omega$-regular $k$-partitions $L,M:X^\omega\to\bar{k}$ are given by Muller $k$-acceptors recognizing them. 

The standard way to represent a Muller $k$-acceptor $({\mathcal M},A)$ is to describe the graph of $\mathcal{M}$ with $n$ vertices (corresponding to the states $q_1,\ldots,q_n$), $2^n$ bit vectors representing the sets of states, and labels $l<k$ of each vector representing the $k$-partition $A$. Because of the $2^n$ bit vector the size of this representation is $O(2^n)$ (we assume that $k$ is constant).
%, apparently it can not be considerably reduced in general. 

%\subsection{Main result on Muller's $k$-acceptors}
%\label{sec:muller}

%Under this representation we state the following.

\begin{theorem} \label{thm:main-acceptors}
The relation $L({\mathcal M}_1,A_1)\leq_WL({\mathcal M}_2,A_2)$ may be decided in quadratic time in the size of the inputs.
\end{theorem}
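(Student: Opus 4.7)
The plan is to combine the construction of iterated $k$-posets from Muller $k$-acceptors (known from~\cite{s11,s12,s23}) with the quadratic algorithm of Theorem~\ref{pos}. From each acceptor $(\mathcal{M}_i, A_i)$ one extracts an iterated $k$-poset $P_i \in \mathcal{V}_k(2)^\sqcup$ such that $L(\mathcal{M}_i, A_i)$ is Wadge complete at the level $\mathcal{L}(P_i)$ of the extension of Wagner's hierarchy (the FH over the reducible 2-base $\mathcal{R}\bfSig = (\mathcal{R}\cap\bfSig^0_1, \mathcal{R}\cap\bfSig^0_2)$). The outer poset structure of $P_i$ records the reachability order between cycles of $\mathcal{M}_i$ (the $\bfSig^0_2$ layer), while the pointed $k$-poset labeling each node encodes, via $A_i$, the $\bfSig^0_1$ structure of its subcycles. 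Since the cycles and their inclusion/labeling data are explicit in the input representation, this construction can be carried out in linear time, giving $|P_i| = O(|(\mathcal{M}_i, A_i)|)$.

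By the identity $\mathcal{L}(P) = \mathcal{L}(u(P))$ (Lemma 8.16(5) of~\cite{s12}) together with the Wadge-completeness of $L(\mathcal{M}_i, A_i)$ at $\mathcal{L}(P_i)$, one obtains the equivalence
\[
L(\mathcal{M}_1, A_1) \leq_W L(\mathcal{M}_2, A_2) \iff P_1 \preceq P_2,
\]
so it suffices to decide $P_1 \preceq P_2$ within quadratic time.

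For this, I apply Theorem~\ref{pos} in two nested steps matching the two-layer iteration in $\mathcal{V}_k(2)^\sqcup$. At the innermost layer the labels lie in $\bar{k}$, whose order is checked in constant time; Theorem~\ref{pos} then yields an algorithm deciding $q_1 \preceq q_2$ on $\mathcal{V}_k$ in time $O(|q_1| \cdot |q_2|)$. Feeding this back into Theorem~\ref{pos} with $Q = \mathcal{V}_k$ decides $P_1 \preceq P_2$ in time $O(|P_1| \cdot |P_2|)$. The disjoint-union shape of $P_i \in \mathcal{V}_k(2)^\sqcup$ causes no issue, since a disjoint union is itself a poset to which Theorem~\ref{pos} directly applies. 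Composing the linear-time construction of $P_i$ with this quadratic comparison yields the stated bound.

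The main obstacle is verifying that the construction $(\mathcal{M}_i, A_i) \mapsto P_i$ indeed runs within the claimed size bound without incurring the exponential blow-up of an unfolding step. This rests on the observation from the introduction: while the forest $u(P_i)$ can grow exponentially, the poset $P_i$ itself has at most one node per cycle of $\mathcal{M}_i$, and the cycle data are already listed explicitly in the standard representation of a Muller $k$-acceptor. Hence $|P_i|$ is bounded by the input size, and the quadratic comparison $O(|P_1| \cdot |P_2|)$ translates directly into a quadratic algorithm in the acceptor size.
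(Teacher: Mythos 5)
Your proposal matches the paper's proof in all essentials: both compute the levels $P_i\in\mathcal{V}_k(2)^\sqcup$ at which the given $k$-partitions are Wadge complete (outer poset given by reachability between cycles, labels given by pointed $k$-posets of cycles under inclusion with the labeling $A_i$), use $\mathcal{L}(P)=\mathcal{L}(u(P))$ to reduce $\leq_W$ to $P_1\preceq P_2$, and decide the latter by the (nested) application of Theorem~\ref{pos}. The one inaccuracy is your claim that $P_i$ is computable in linear time because the inclusion and reachability data are ``explicit in the input'' --- they are not, and the paper instead computes the preorders $\leq_0$ and $\leq_1$ over all pairs of cycles in $O(2^{2n})$ time, which is quadratic in the $O(2^n)$ input size, so the stated bound is unaffected.
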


\begin{proof}
     According to the idea sketched in Sections \ref{in} and \ref{pre}, we first compute the levels $P_1,P_2\in\mathcal{V}_k(2)^\sqcup$ in which the given $k$-partitions are Wadge complete, and then apply the algorithm of Theorem \ref{pos} to check whether $P_1\preceq P_2$. We explain how to compute $P_1$. Let $\leq_0$ and $\leq_1$ be preorders on $C_{\mathcal{M}_1}$ defined in \cite{wag79} as follows: $c\leq_0d$, if some (equivalently, every) state in $d$ is reachable from some (equivalently, every) state in $c$; $c\leq_1d$, if $c\supseteq d$. It follows that $\leq_1$ implies $\equiv_0$ (the equivalence relation induced by $\leq_0$), and that any $\equiv_0$-equivalence class has a largest cycle under inclusion. The preorders are easily computable from the presentation of $({\mathcal M}_1,A_1)$ by using reachability in the graph of $\mathcal{M}_1$. Indeed, for preorder $\leq_0$ it is enough to precompute reachability relation in the graph of $\mathcal{M}_1$ (this can be done in time $O(n^2)$ for graphs of constant degree, for example, by running bredth-first search from each vertex of the graph) and then for each pair of subsets of vertices to check, if a vertex in one subset is reachable from a vertex in the other one. For this we need constant time for each pair of subsets, which gives us $O(2^{2n})$ overall.
     
     For relation $\leq_1$ we need to compute for each pair of subsets if one is included in the other one (we compute this relation for all subsets and then consider it for cycles only). We can think of subsets as enumerated by their characteristic vectors interpreted as binary representation of integers. We can compute the relation by splitting the vertices into two parts depending on the first coordinate of the characteristic vector. In each part we compute the relation recursively, and  to compute the relation between the parts, observe that the relation holds for $0x$ and $1y$ iff it holds for $1x$ and $1y$, and the latter is already computed. To bound the running time of this recursive procedure, note that for any pair of subsets we need constant time of computation. Thus, the total time of this step is $O(2^{2n})$ as well.
     
     %The time required by this computation is $O\left(n \cdot |({\mathcal M}_1,A_1)| \cdot |({\mathcal M}_2,A_2)|\right)$, since for each pair of cycles we need to compute reachability, which can be done in linear time in the size of the graph of $\mathcal{M}_1$, which is linear in $n$, since the graph has constant degree.\nb{The problem with this bound is that this is the most complicated computation now and our algorithm is not quadratic anymore}

    As explained in the proof of Lemma 15 in \cite{s23}, we can take $P_1=(C_{\mathcal{M}_1}/_{\equiv_0},\leq_0,d)$ where $d:C_{\mathcal{M}_1}/_{\equiv_0}\to\mathcal{V}_k$ is defined by $d([c]_0)=([c]_0,\leq_1,A_1|_{[c]_0})$. Note that the equivalence classes in $C_{\mathcal{M}_1}/_{\equiv_0}$ bijectively correspond to the reachable strongly connected components (SCCs) of the graph of $\mathcal{M}_1$ (i.e., the largest cycles), hence  $C_{{\mathcal M}_1}/_{\equiv_0}$ may be replaced by the set of all SCCs (canonical representatives in the equivalence classes).
    The latter set is computable in linear time from the graph of automaton by Tarjan's algorithm \cite{t72}. Altogether, $P_1,P_2$ are easily computable, and deciding the relation $P_1\preceq P_2$ is quadratic in the size of input by Theorem~\ref{pos}.
 \end{proof}

\section{Representation of $k$-acceptors} 
\label{sec:representation}

As we discussed, the standard representation of a Muller $k$-acceptor is of size $\Theta(2^n)$.
In this section we observe that actually, the number of possible cycles in the acceptor is at most $O(C^{n})$, where $C<2$ is a positive constant independent of the size of the acceptor (but dependent on the size of the alphabet). As a result, it is possible to have a more compact representation for $k$-acceptors.

Denote the number of vertices in the acceptor by $n$ and the size of the alphabet by $d$. We prove the following statement.

\begin{lemma} \label{lem:number-of-cycles}
    The number of cycles in an acceptor is at most
    $$
    \max(2^d, C^n+n),
    $$
    where $C = 2\left(1 - \frac{1}{2^{d+1}} \right)^{\frac{1}{d+1}} <2$. 
\end{lemma}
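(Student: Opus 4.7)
The plan is to reduce counting Muller cycles to a purely combinatorial count of subsets of the state set $Q$ satisfying a local closure condition, and then to bound the latter by a grouping argument.

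First, I would observe that every Muller cycle $S = f_{\mathcal{M}}(\xi)$ satisfies the following local closure property $(\ast)$: for each $s \in S$, there exists a letter $x \in X$ with $f(s,x) \in S$. Indeed, $s$ is visited infinitely often along the run $f(in,\xi)$ and each visit is immediately followed by some transition; if every out-transition of $s$ led outside $S$, then by determinism and finiteness of $Q$ the automaton would exit $S$ at every visit to $s$, and could not keep the recurrent set equal to $S$. Thus it suffices to bound the number of subsets of $Q$ satisfying $(\ast)$, which in turn upper bounds the number of cycles.

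Second, I would count such subsets via a grouping argument. For each $v \in Q$, let $B_v = \{f(v,x) \mid x \in X\}$, so $|B_v| \leq d$. Property $(\ast)$ forbids the restriction $S \cap (\{v\} \cup B_v) = \{v\}$: if $v \in S$ but $S \cap B_v = \emptyset$, closure fails at $v$. The key idea is to partition $Q$ into $n/(d+1)$ disjoint blocks of the form $G_i = \{v_i\} \cup B_{v_i}$, each of size exactly $d+1$; in every such block the restriction $\{v_i\}$ is forbidden, leaving at most $2^{d+1}-1 = C^{d+1}$ admissible restrictions, and multiplying over the blocks yields the desired $C^n$.

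Third, I would account for the obstructions to a clean partition. Vertices $v$ with a self-loop ($v \in B_v$) satisfy $(\ast)$ automatically whenever $v \in S$ and in particular give rise to singleton cycles; there are at most $n$ of these, which is exactly the additive $+n$. The $\max$ with $2^d$ absorbs the trivial regime $n \leq d$, where the total number of subsets is already $2^n \leq 2^d$. The residual difficulties --- overlaps $B_{v_i} \cap B_{v_j} \neq \emptyset$ and vertices with fewer than $d$ distinct out-neighbors --- are the technical heart of the argument.

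The main obstacle is making the grouping work in general, since the clean partition into blocks of size $d+1$ need not exist. I expect to handle this either by a greedy selection of pivots whose out-neighborhoods are disjoint from previously chosen blocks, charging any residual boundary vertices to the additive $+n$ term, or by a convexity/averaging argument showing that variable-sized groups $\{v\} \cup B_v$ still contribute an average multiplicative factor of at most $C^{d+1}$ per $d+1$ vertices, so that the product bound survives.
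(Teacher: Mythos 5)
There is a genuine gap, and it starts at your very first reduction. You replace ``cycle'' by the local \emph{out}-closure property $(\ast)$ (every $s\in S$ has some out-transition landing in $S$) and claim it suffices to count the subsets satisfying $(\ast)$. That class can be exponentially larger than $C^n$: take an automaton in which every state $q$ satisfies $f(q,x_j)=w_j$ for fixed states $w_1,\dots,w_d$. Then a nonempty $S$ satisfies $(\ast)$ iff $S\cap\{w_1,\dots,w_d\}\neq\emptyset$, so there are $2^n-2^{n-d}\geq 2^{n-1}$ such sets, while $C^n+n=o(2^{n-1})$ since $C<2$ strictly. (The actual cycles here number at most $2^d$, because no run returns to any state outside $\{w_1,\dots,w_d\}$.) The property the paper exploits is the \emph{in}-closure: every state of a cycle of size at least $2$ must have an in-neighbor inside the cycle. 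This is not a cosmetic choice: the sets $A_i=\{q_i\}\cup\inneigh(q_i)$ have the feature that each vertex $q_j$ lies in at most $1+|\outneigh(q_j)|\leq d+1$ of them (bounded multiplicity), whereas your blocks $\{v\}\cup B_v$ built from out-neighborhoods have bounded size but unbounded multiplicity (a vertex of large in-degree lies in many of them), and in the example above they all coincide, so no disjoint partition into $n/(d+1)$ blocks exists.

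The second gap is that even granting a usable local forbidden pattern, the ``technical heart'' you defer --- combining overlapping blocks into a multiplicative $C^n$ bound --- is exactly where the paper invokes a nontrivial tool, namely the Product Theorem of Bj\"orklund et al.\ (Lemma~\ref{lem:product-theorem}, a Shearer-type projection inequality): each vertex lies in exactly $d+1$ of the padded sets $A_i'$, each projection of $\CC'$ onto $A_i'$ omits $\{q_i\}$ and so has at most $f(|A_i'|)=2^{|A_i'|}-1$ members, and the lemma yields $|\CC'|\leq f(d+1)^{n/(d+1)}=(2^{d+1}-1)^{n/(d+1)}=C^n$ with no disjointness required. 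Neither of your fallback ideas substitutes for this: a greedy choice of pivots with disjoint out-neighborhoods may select only $O(1)$ blocks (as in the example), and the uncovered vertices each contribute a factor of $2$, not something chargeable to the additive $+n$, which the paper reserves for the at most $n$ singleton cycles excluded from $\CC'$. Your treatment of the $\max(2^d,\cdot)$ case and of the $+n$ term is in the right spirit, but the core of the argument is missing and the reduction it rests on is false as stated.
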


Note that the first term $2^d$ in the maximum is constant in terms of $n$ and is needed only to cover the case of small constant $n$.

In the proof of the lemma we will use the following lemma proved in~\cite{bhkk08} (we state only the special case of their lemma that will be enough for us). This lemma is a simple consequence of the classic Product Theorem~\cite[Theorem 22.10]{j11}.

\begin{lemma}[\cite{bhkk08}] \label{lem:product-theorem}
    Let $V$ be a finite set with $n$ elements and with subsets $A_1, \ldots, A_n$, such that every $v \in V$ is contained in exactly $\delta$ subsets. Let $\FF$ be a family of subsets of $V$ and assume that there is a log-concave function $f \geq 1$ such that the projections $\FF_i = \{F \cap A_i \mid F \in \FF \}$ satisfy $|\FF_i| \leq f(|A_i|)$ for each $i = 1, \ldots, n$. Then,
    $$
    |\FF| \leq f(\delta)^{n/\delta}.
    $$
\end{lemma}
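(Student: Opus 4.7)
The plan is to derive the stated bound by composing two standard ingredients: Shearer's entropy lemma (which is exactly the ``classic Product Theorem'' of~\cite{j11} cited by the authors in the paragraph just before the statement) and Jensen's inequality applied to $\log f$. First I would invoke Shearer in its covering form: when each element of $V$ lies in exactly $\delta$ of the subsets $A_i$, for any family $\FF \subseteq 2^V$ one has
$$
|\FF|^\delta \;\leq\; \prod_{i=1}^n |\FF_i|,
$$
where $\FF_i = \{F \cap A_i : F \in \FF\}$. The textbook proof runs via Shannon entropy: if $F$ is chosen uniformly from $\FF$, then $\log|\FF| = H(F) = H(F\cap A_1,\ldots,F\cap A_n)$ is bounded above by $\sum_i H(F\cap A_i)\le \sum_i\log|\FF_i|$, and one shows by the cover hypothesis that each coordinate of $V$ is counted exactly $\delta$ times on both sides, producing the $\delta$-th power on the left. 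Substituting the hypothesis $|\FF_i| \leq f(|A_i|)$ gives
$$
|\FF|^\delta \;\leq\; \prod_{i=1}^n f(|A_i|).
$$

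Next I would convert the right-hand side into $f(\delta)^n$ using log-concavity. Double counting the incidence pairs $\{(v,i) : v \in A_i\}$ yields $\sum_{i=1}^n |A_i| = n\delta$, so the arithmetic mean of the $|A_i|$ is exactly $\delta$. Since $f \geq 1$ is log-concave on $\mathbb{Z}_{\geq 0}$, the map $\log f$ extends by piecewise-linear interpolation to a concave function on $\mathbb{R}_{\geq 0}$, and Jensen's inequality applied to this extension gives
$$
\frac{1}{n}\sum_{i=1}^n \log f(|A_i|) \;\leq\; \log f\!\left(\frac{1}{n}\sum_{i=1}^n |A_i|\right) \;=\; \log f(\delta).
$$
Exponentiating yields $\prod_{i=1}^n f(|A_i|) \leq f(\delta)^n$; combining with the Shearer bound and taking $\delta$-th roots gives the target $|\FF| \leq f(\delta)^{n/\delta}$.

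The only point requiring any care is the Jensen step: log-concavity on integers does not literally assert concavity of $\log f$ on $\mathbb{R}$, but the piecewise-linear interpolation is concave, and because the arithmetic mean $\delta$ is itself an integer we are evaluating $\log f$ at an actual lattice point, so no interpolation discrepancy arises on the right-hand side of Jensen. Apart from this small bookkeeping issue, the proof is a clean two-step composition, and I do not expect a genuine obstacle — the substantive content is effectively imported from Shearer's lemma, with log-concavity of $f$ doing exactly the work of turning the resulting product over unequal set sizes into a single clean power $f(\delta)^n$.
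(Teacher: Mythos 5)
Your proof is correct, and it follows exactly the route the paper itself indicates: the paper does not prove this lemma but imports it from \cite{bhkk08} with the remark that it is a simple consequence of the classic Product Theorem (Shearer's lemma), which is precisely your first step, and the remaining Jensen/log-concavity argument (using $\sum_i |A_i| = n\delta$ and the fact that $\delta$ is an integer, so no interpolation issue arises) is the standard completion used in \cite{bhkk08}. No gaps.
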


Next we proceed to the proof of Lemma~\ref{lem:number-of-cycles}. The proof follows the same strategy as the proof of Lemma~6 in~\cite{bhkk08}.

\begin{proof}[Proof of Lemma~\ref{lem:number-of-cycles}]
    Denote the states of the automata by $Q = \{q_1, \ldots, q_n\}$ and consider the automata as a directed graph on $Q$, in which for every vertex $q_i$ and for every letter in the alphabet $x \in X$ there is an edge leaving $q_i$ labeled by $x$ (some edges might be parallel and some edges might be loops). In particular, the out-degree of every vertex in the graph is $|X|=d$.

    If $n \leq d$, then the number of cycles is less or equal to the number of subsets in $Q$, which is $2^d$. In this case we are done. Thus, from now on we  assume that $n \geq d+1$.

    Denote by $\inneigh(q)$ for $q \in Q$ the set of vertices that have an outgoing edge to $q$. Analogously denote by $\outneigh(q)$ the set of vertices that have incoming edges from $q$. Let $V = Q$ and $A_i = \{q_i\} \cup \{\inneigh(q_i)\}$. If for $q_j$ we have $|\outneigh(q_j)| = l$, then $q_j$ is contained in at most $l+1$ sets $A_i$. Note that $l$ might be smaller than $d$, since there might be parallel edges and loops. If $l < d$, we add $q_j$ to arbitrary sets $A_i$, that do not contain it yet, until $q_j$ is in $d+1$ sets. This is possible, since the total number of sets is $n$ and $n \geq d+1$. Denote the resulting sets by $A_i'$. Now each vertex $q_j$ is contained in exactly $d+1$ sets $A_i'$. 

    Denote by $\CC$ the set of all cycles in the acceptor. Let $\CC' = \CC \setminus \{\{q\} \mid q \in Q\}$, that is $\CC'$ consists of cycles of size at least 2.

    Note, that for any $i$ the size of the set of projections $|\CC_i'| = \{C \cap A_i \mid C \in \CC' \}$ is at most $2^{|A_i|}-1$, since $\{q_i\}$ cannot be a projection (any cycle of size at least $2$ containing $q_i$ must contain one of its neighbors in $\inneigh(q_i)$).

    Consider a log-concave function $f(a) = 2^{a} -1 $ and apply Lemma~\ref{lem:product-theorem} to $f$, $\CC'$, sets $A_1', \ldots, A_n'$ and $\delta = d+1$. We get
    $$
    |\CC'| \leq f(d+1)^{n/d+1} = \left( 2^{d+1} - 1 \right)^{\frac{n}{d+1}}.
    $$
    Since $|\CC| \leq |\CC'| + n$, the lemma follows.
\end{proof}

Thus, the number of cycles can be substantially smaller than the number of subsets on $Q$. As a result, if in the representation of an acceptor we provide the binary vector that contains a bit for each cycle, instead of each subset of states, the representation becomes more compact. However, for an algorithm to interpret this input, it needs first to compute the list of all cycles. In the next lemma we show that this can be done efficiently.

\begin{lemma}
    Given an acceptor $\mathcal M$ the set $\CC$ of all cycles can be computed in time $O(n \cdot |\CC|^2)$.
\end{lemma}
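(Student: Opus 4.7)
The plan is a worklist enumeration that starts from the maximal cycles of $\mathcal{M}$ and shrinks them one vertex at a time. The key structural fact is: if $C' \subsetneq C$ are both cycles and $v$ is any vertex of $C \setminus C'$, then after deleting $v$ from $C$ and iteratively pruning from the remainder every vertex whose transitions all leave the current set, what is left decomposes into SCCs, each of which is itself a cycle, and $C'$ sits inside one of them. Iterating this fact shows that every cycle can be obtained from a maximal cycle by a finite sequence of single-vertex deletions together with stabilization and SCC extraction.

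Concretely, I would first run Tarjan's algorithm~\cite{t72} on the part of $\mathcal{M}$ reachable from the initial state in $O(n)$ time, and seed $\CC$ and a worklist $W$ with those reachable SCCs in which every state has an outgoing edge inside the SCC. I would store $\CC$ as a trie indexed by length-$n$ characteristic vectors, so that insertions and membership queries cost $O(n)$. Then while $W\neq\emptyset$, pop a cycle $C$; for each $v\in C$, build the subgraph on $C\setminus\{v\}$ restricted to the transitions whose targets remain inside, stabilize it by queue-based deletion of vertices with zero out-degree, and run Tarjan on the stabilization. Each resulting SCC is by construction a cycle; I look it up in the trie and, if new, add it to $\CC$ and $W$. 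Correctness is a downward induction on cycle size using the structural fact above.

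For the running time, each cycle is popped once and contributes $O(n|\CC|)$ total insertion cost. The dominant term comes from trie lookups for candidates that already exist. Processing a popped cycle $C$ generates a family of candidate SCCs across the $|C|\le n$ vertex deletions, and each candidate costs $O(n)$ for its local SCC computation and trie query. Charging every ``not new'' candidate to the pair (popped cycle, matching stored cycle) bounds the number of such charges by $|\CC|^2$ at $O(n)$ each, yielding the advertised $O(n\cdot|\CC|^2)$ bound. The main obstacle I anticipate is exactly this amortized accounting: a naive pass gives $O(n^2|\CC|)$, and sharpening it to $O(n|\CC|^2)$ requires either the pair-charge argument just sketched or a canonical-parent scheme in which each cycle is produced from a single designated ancestor, so that no duplicate work is ever performed.
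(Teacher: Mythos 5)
Your algorithm is genuinely different from the paper's: you work top-down, seeding with the maximal cycles (the reachable SCCs) and generating smaller cycles by single-vertex deletion followed by SCC extraction, whereas the paper works bottom-up, first listing all \emph{elementary} cycles with Johnson's output-sensitive algorithm, building the intersection graph $G_C$ on them, and enumerating cycles as unions over connected induced subgraphs of $G_C$ with a red-black tree for deduplication. Your correctness argument is essentially sound (modulo the small point that an SCC of the stabilized remainder can be a singleton without a self-loop and must be discarded, since it is not a cycle). But the running-time bound is not established, and the obstacle you flag at the end is a real gap, for two separate reasons. First, the pair-charging is not injective: a fixed stored cycle $C''$ can reappear as an SCC of $C\setminus\{v\}$ for every $v\in C\setminus C''$, so the pair $(C,C'')$ may be charged up to $n$ times; e.g.\ if $C''$ is a $2$-cycle sitting inside a large elementary cycle $C$, deleting any of the other $|C|-2$ vertices regenerates $C''$. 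This pushes the duplicate-lookup cost to $O(n^2|\CC|^2)$ rather than $O(n|\CC|^2)$, and a ``canonical parent'' does not obviously help because you must still generate a candidate before you can test whether it is canonical. Second, even ignoring duplicates, each popped cycle $C$ incurs $\Omega(|C|)$ work (building the subgraph and running Tarjan) for each of its $|C|$ deletions, i.e.\ $\Omega(|C|^2)$ per pop; for an automaton whose reachable part is a single elementary cycle of length $n$ this is $\Omega(n^2)$ while the claimed bound is $O(n\cdot|\CC|^2)=O(n)$, since $|\CC|=1$.

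The reason the paper's accounting closes where yours does not is structural: in the bottom-up version each stored cycle is dequeued exactly once and has at most $|\CC'|\leq|\CC|$ neighbors in $G_C$ to try merging with, so the number of candidate instances is bounded by $|\CC|\cdot|\CC'|$ outright, each costing $O(n)$ for the union and the lookup; duplicates are controlled because two connected subgraphs of $G_C$ with the same vertex-union have the same neighborhood, so only one representative per cycle is ever expanded. In your top-down scheme the branching at each node is governed by $n$ (choices of deleted vertex, and number of resulting SCCs), not by $|\CC|$, and that is exactly where the factor $|\CC|$ in the target bound gets replaced by factors of $n$. To salvage your approach you would need either an output-sensitive way to enumerate, for a given cycle $C$, the distinct maximal proper sub-cycles of $C$ without iterating over all $|C|$ deletions, or a different charging scheme that bounds the total number of (deletion, SCC) events across the whole run by $O(|\CC|^2)$; neither is provided.
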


\begin{proof}
    Let an \emph{elementary cycle} in a directed graph be a sequence of distinct vertices $v_1, \ldots, v_t$, such that from each vertex $v_i$ there is a directed edge to $v_{i+1}$ and there is a directed edge from $v_t$ to $v_1$.
    The paper~\cite{j75} provides an algorithm that constructs all elementary cycles in a directed graph $G = (V,E)$ in time $O((|V|+|E|)p)$, where $p$ is the number of elementary cycles. We first apply this algorithm to the acceptor and denote the resulting list by $\CC'$.

    From this list we can construct the list $\CC$. For this we first consider an undirected graph $G_C$ with $\CC'$ as a set of vertices, such that $C_i$ and $C_j$ are connected by an undirected edge if they share at least one vertex. It is easy to see that cycles in $\CC$ correspond to induced connected subgraphs of $G_C$.

    The graph $G_C$ can be constructed in time $O(n |\CC'|^2)$: it is enough for each pair of vertices $C_i$, $C_j$ to check if the cycles share a vertex. This can be done in $O(n)$.

    To store all elements of $\CC$ we maintain a red-black tree data structure on them. In the beginning the data structure is empty. To add elements of $\CC$ to the red-black tree we iterate through the induced connected subgraphs of $G_C$ and check if they correspond to a new element of $\CC$. More precisely, first we create a queue of size 1 subgraphs of $G_C$ (they correspond to just elements of $\CC'$). With each element $S \subseteq \CC'$ in the queue we will store the characteristic vector $N(S)$ of the set of its neighbors in $G_C$ and the characteristic vector $Cyc(S)$ of the cycle in $\mathcal M$ that it corresponds to. Computing these vectors for the each of the first elements of the queue naively takes time $O(|\CC'|)$ and $O(n)$ respectively. 

    Next we repeat the following step. We extract the current first element $S \subseteq \CC'$ of the queue. For each of its neighbors $C$ (there are at most $|\CC'|$ of them), we add $C$ to $S$ and compute $Cyc(S \cup C)$. This can be done in time $O(n)$ by scanning through the corresponding vectors for $S$ and $C$. We check if the cycle $Cyc(S \cup C)$ was already computed before. If it was, we move on to the next neighbor of $S$. If this is a new cycle we add it to the red-black tree and add $S \cup C$ to the end of the queue. These operations with the red-black tree can be performed in time logarithmic in the size of the tree, that is in time $O(\log |\CC|) = O(n)$ (since $|\CC| \leq 2^n$). When adding new element to the queue we compute $N(S \cup C)$, this takes time $O(|\CC'|)$.  We are done once the queue is empty.

    For the correctness of this procedure, note that if two induced connected subgraphs $S, S' \subseteq \CC'$ correspond to the same cycle, they have the same set of neighbors in $G_C$. As a result, our queue will scan through subgraphs corresponding to all cycles. 
    
    Moreover, for each cycle $C \in \CC$ we will have exactly one subgraph corresponding to it. For the running time this means that the total number of subgraphs that are added to the queue is $|\CC|$. When we add each element to the queue, we compute $N(S)$ for it. For each $S$ in the queue we consider at most $|\CC'|$ neighbors, compute $Cyc$ for the union and perform queue operations. In total, the running time is $O(|\CC| \cdot (|\CC'| + |\CC'|\cdot n)) = O(|\CC| \cdot |\CC'| \cdot n) = O(|\CC|^2 \cdot n)$.
\end{proof}

Next we prove a version of Theorem~\ref{thm:main-acceptors} for the more compact representation of $k$-acceptors.

\begin{theorem}
The relation $L({\mathcal M}_1,A_1)\leq_WL({\mathcal M}_2,A_2)$ may be decided in time $O\left(n \cdot C^2 + n^2\right)$, where $n$ is the size of the representations of graphs of ${\mathcal M}_1$ and ${\mathcal M}_2$ and $C$ is the size of the representations of cycles in ${\mathcal M}_1$ and ${\mathcal M}_2$.
\end{theorem}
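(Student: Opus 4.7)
The plan is to adapt the proof of Theorem~\ref{thm:main-acceptors} to the more compact input format. The crux is to first unpack the compact cycle representation and then reuse the construction of the iterated posets $P_1, P_2 \in \mathcal{V}_k(2)^\sqcup$ together with the comparison algorithm of Theorem~\ref{pos}.

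First, given each automaton $\mathcal{M}_i$ together with its bit vector indexed by cycles, I would invoke the preceding lemma to compute explicitly the list of cycles $\mathcal{C}_i$ of $\mathcal{M}_i$. By that lemma the cost is $O(n \cdot |\mathcal{C}_i|^2)$, and since $C$ bounds the total cycle-representation size over the two acceptors (and in particular $|\mathcal{C}_1|+|\mathcal{C}_2|$), this stage fits inside $O(n \cdot C^2)$. The $k$-partition labels $A_i$ can then be read off position-by-position from the input bit vector once the cycles are enumerated in a canonical order.

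Second, I would assemble the iterated posets $P_1, P_2$ exactly as in the proof of Theorem~\ref{thm:main-acceptors}. Tarjan's algorithm yields the SCC decomposition (of size at most $n$) in linear time. The reachability preorder $\leq_0$ between SCCs is precomputed in $O(n^2)$ time via a breadth-first search from each vertex, exploiting the fact that the graphs have constant out-degree $|X|$. The inclusion preorder $\leq_1$ on cycles, used inside each SCC to build its pointed $\mathcal{V}_k$-label, is a pairwise comparison of characteristic vectors and costs $O(n \cdot C^2)$ overall. The resulting outer posets have size $O(n)$ with $\mathcal{V}_k$-labels of total size $O(C)$.

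Third, I would apply Theorem~\ref{pos} to decide $P_1 \preceq P_2$. Labels live in $\mathcal{V}_k$ and are compared recursively by the same poset-comparison algorithm, so the quadratic hypothesis of the theorem is satisfied and the bound $O(|P_1| \cdot |P_2|)$ yields $O((n+C)^2)$, which is absorbed by $O(n^2 + n \cdot C^2)$. Summing over the three stages gives the claimed complexity. The main obstacle I anticipate is bookkeeping: one must verify that the cycle-expansion stage from the preceding lemma, and the subsequent construction of $\leq_1$ on cycles, both fit in the $O(n \cdot C^2)$ budget rather than degrading to an $O(2^n)$-style enumeration over all subsets of states. Once the iterated posets are on the table, Theorem~\ref{pos} carries out the substantive comparison work.
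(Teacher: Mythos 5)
Your proposal is correct and follows essentially the same route as the paper: enumerate the cycles via the preceding lemma, compute $\leq_0$ by precomputing reachability in $O(n^2)$ and checking pairs of cycles, compute $\leq_1$ by pairwise inclusion tests in $O(n)$ each for a total of $O(n\cdot C^2)$, and then build $P_1,P_2$ and invoke Theorem~\ref{pos}. You are somewhat more explicit than the paper about the cycle-enumeration step and the final accounting for the poset comparison, but the substance is identical.
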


\begin{proof}
    The proof is analogous to the proof of Theorem~\ref{thm:main-acceptors}.
    
    As before, we compute preorders $\leq_0$ and $\leq_1$. 
    For $\leq_0$, as before, we can precompute reachability relation on the graph of the automata (this takes $O(n^2)$ time) and then for each pair of cycles check reachability between a couple of vertices in them (this requires $C^2$ time). For the inclusion relation we can check inclusion of each pair of cycles in $O(n)$ straightforwardly. This results in time $O(n \cdot C^2)$

    The remaining part of the proof remains completely the same and requires $O(C^2)$ time.
 \end{proof}
%\section{Conclusion}\label{con}

\subsection*{Acknowledgments}
%\subsubsection{\ackname} 
V. Selivanov's research was supported by the Russian Science Foundation, project 23-11-00133.

\nocite{*}
\bibliographystyle{eptcs}
\bibliography{bib}

\end{document}